\documentclass[11pt,a4paper]{article}

\usepackage{graphicx}
\usepackage{amssymb,amsmath,amsthm}
\usepackage{color}
\usepackage{fullpage}
\usepackage{url}

\newcommand{\mb}{\mathbf} \newcommand{\mr}{\mathrm}
\newcommand{\be}{\begin{equation}}
\newcommand{\ee}{\end{equation}}
\newtheorem{lemma}{Lemma}
\newtheorem{theorem}{Theorem}

\title{Security of practical private randomness generation}
\author{Stefano Pironio and Serge Massar\\
Laboratoire d'Information Quantique, Universit\'e Libre de
Bruxelles, Brussels (Belgium)
%Av. F. D. Roosevelt 50, B-1050 Brussels, Belgium}
}
\date{March 4, 2011}

\begin{document}
 \maketitle

\begin{abstract} 
Measurements on entangled quantum systems necessarily yield outcomes 
that are intrinsically unpredictable if they violate a Bell 
inequality. This property can be used to generate certified randomness in a device-independent way, i.e., without making detailed assumptions 
about the internal working of the quantum devices used to generate the random numbers. Furthermore these numbers are also private, i.e., they appear random not only to the user, but also to any adversary that might possess a perfect description of the devices.
Since this process requires a small initial random seed to sample the behaviour of the quantum devices and to extract uniform randomness from the raw outputs of the devices, one usually speaks of device-independent randomness \emph{expansion}.

The purpose of this paper is twofold. First, we point out that in most real, practical situations, where the concept of device-independence is used as a protection against unintentional flaws or failures of the quantum apparatuses, it is sufficient to show that the generated string is random with respect to an adversary that holds only classical-side information, i.e., proving randomness against quantum-side information is not necessary. Furthermore, the initial random seed does not need to be private with respect to the adversary, provided that it is  generated in  a way that is independent from the measured systems.  The devices, though, will generate cryptographically-secure randomness that cannot be predicted by the adversary and thus one can, given access to free public randomness, talk about private randomness \emph{generation}.
 
The theoretical tools to quantify the generated randomness according to these criteria were already introduced in \emph{S. Pironio et al, Nature \textbf{464}, 1021 (2010)}, but the final results were improperly formulated. The second aim of this paper is to correct this inaccurate formulation and therefore lay out a precise theoretical framework for practical device-independent randomness generation.
\end{abstract}

\section{Introduction}
Random numbers are essential for many applications such as computer simulations, statistical sampling, gambling, or video games. They are particularly important for classical and quantum cryptography, where the use of a flawed random number generator (RNG) can completely compromise the security. 
Many solutions have thus been proposed for the generation of random numbers (for recent work on random number generation see, e.g.,\cite{jennewein,stefanov,dynes,fiorentino,rng5,rng6,rng7}), but none is entirely satisfactory. As quoted from Wikipedia 
\cite{wikipedia}, every random number generator (RNG) is subject to the following problems:
\begin{quotation}
``It is very easy to misconstruct hardware or software devices which attempt to generate random numbers. Also, most `break' silently, often producing decreasingly random numbers as they degrade. A physical example might be the rapidly decreasing radioactivity of the smoke detectors \emph{[...]}.
Failure modes in such devices are plentiful and are complicated, slow, and hard to detect.

Because many entropy sources are often quite fragile, and fail silently, statistical tests on their output should be performed continuously. Many, but not all, such devices include some such tests into the software that reads the device.

Just as with other components of a cryptosystem, a software random number generator should be designed to resist certain attacks. Defending against these attacks is difficult. \emph{[...]}

\emph{[On estimating entropy]}. There are mathematical techniques for estimating the entropy of a sequence of symbols. None are so reliable that their estimates can be fully relied upon; there are always assumptions which may be very difficult to confirm. These are useful for determining if there is enough entropy in a seed pool, for example, but they cannot, in general, distinguish between a true random source and a pseudo-random generator.

\emph{[On performance test]}.
Hardware random number generators should be constantly monitored for proper operation. \emph{[...]}
Unfortunately, with currently available (and foreseen) tests, passing such tests is not enough to be sure the output sequences are random. A carefully chosen design, verification that the manufactured device implements that design and continuous physical security to insure against tampering may all be needed in addition to testing for high value uses."
\end{quotation}

Device-independent randomness generation aims to address these problems by exploiting the intrinsic unpredictability associated with the violation of Bell inequalities \cite{Bell,valentini,bhk}. More precisely, consider a quantum system composed of two separated parts $A$ and $B$ which upon receiving respective inputs $V^a$ and $V^b$, return respective outputs $X^a$ and $X^b$. If after $n$ successive uses of the devices, the \emph{observed} data violates a Bell inequality, it is then possible to \emph{certify} that the output string $(X^a_1,X^b_1),\ldots,(X^a_n,X^b_n)$ contains a certain amount of min-entropy, even when conditioned on the value of the inputs $(V^a_1,V^b_1),\ldots,(V^a_n,V^b_n)$, and a randomness extractor can therefore be applied to the outputs to obtain almost-uniform random bits. Furthermore, this conclusion can be reached independently of any detailed assumptions about the inner working of the devices and is thus immune to most of the problems mentioned above. 

That the violation of Bell inequalities is an indicator of quantum randomness had probably been recognized early on by many physicists, but was made explicit only recently in \cite{valentini, bhk, amg}. Not surprisingly, it was suggested shortly thereafter that Bell inequality violating systems could be exploited for randomness generation, and a scheme based on GHZ states was proposed in \cite{colbeck}. The possibility of device-independent randomness generation, however, was established only in \cite{dirng}, where a method to bound the min-entropy of the devices' output as a function of the observed Bell violation was introduced. Furthermore,  a	 proof-of-principle experimental demonstration was realised using two trapped ions.

The concept of device-independence (DI) is not restricted to randomness generation but includes adversarial applications such as quantum key distribution (QKD) \cite{ekert, Mayers 0,bhk,Acin et al. 4,Acin et al.} and coin tossing \cite{silman}, and non-adversarial ones such as state estimation \cite{Bardyn et al.}, entanglement witnesses \cite{BGLP}, and self-testing of quantum computers \cite{Magniez et al.}. In adversarial applications of device-independence it is often remarked that since the correctness of the protocol can be verified without making assumptions about the inner working of the devices, these could even have been prepared by the adversary itself. This has at least two implications as regards the theoretical analysis of device-independent randomness generation (and has also various implications for its experimental implementation, some of which will be briefly discussed later on).

First, if the adversary is allowed to prepare the quantum devices, nothing prevents him to entangle them with a quantum state that he keeps for himself in a quantum memory. It is then a priori possible that if he sees part of the devices' output at some later stage, he could measure his quantum state in a way that would give him useful information about the remainder of the output string. One thus needs to show that the output produced by the device  also appear random with respect to the quantum-side information held by the adversary. The methods introduced in \cite{dirng}, however, have been shown so far to estimate randomness only against classical-side information, i.e., against adversaries who do not share entanglement with the quantum devices.

Second, if the adversary happens to have some prior knowledge of the inputs used to sample the devices, he could exploit it to program the devices in a way that would mimick the violation of a Bell inequality while at the same time giving him substantial information about the generated outputs. A random, private seed is thus necessary to select the inputs and start off the protocol.  In addition, one also need some initial randomness to extract uniform random bits from the devices' outputs. One thus often speaks of device-independent randomness \emph{expansion} (DIRE). A scheme achieving quadratic expansion was presented in \cite{dirng}, where it was also suggested to use more than one pair of devices to obtain greater (e.g., exponential) expansion.

In this paper, we do not investigate this extremal adversarial scenario where the quantum devices have been acquired from a malicious provider. We are instead interested in the more real-life and practical situation where the manufacturer of the device is assumed to be honest, but where the concept of device-independence is used to provide an accurate estimation of the amount of randomness generated independently of noise, limited control of the apparatuses, or unintentional flaws of the devices. We point out in Section 2 that in this context it is sufficient to prove security against classical-information. Furthermore, the initial seed used to sample the devices and perform the randomness extraction does  not necessarily need to be private with respect to the adversary (it simply needs to be chosen in a way that is independent from the state of the devices). The output of the protocol,  though, will represent a private random string. In this case one can thus talk about private randomness \emph{generation}, given access to public randomness. (In the following, we will keep using the single terminology ``device-independent randomness expansion" to refer to the two situations in which the initial randomness is considered to be private or is viewed as a free, public resource).  

In Section 3, we then analyse the security of DIRE from this perspective. In particular, subsection 3.2. contains a detailed presentation of the model that we consider and of the assumptions on which it is based. Our main results are presented in subsection 3.3., where we show how to estimate the randomness produced in a Bell experiment if those assumptions are satisfied. Our analysis relies essentially on the tools introduced in \cite{dirng}, but importantly it fixes an issue that led to an improper formulation of the final results of \cite{dirng}\footnote{Specifically, the problem lies with Eq. (3) and Eq. (A.9) of the Supplementary Information of \cite{dirng} and with the final steps leading to these equations.}.  A very similar analysis has been presented in the independent work \cite{fgs}.  We briefly discuss how these results directly imply the security of various DIRE schemes in subsection 3.4.

Finally, we point out that a randomness-expansion scheme with superpolynomial expansion
and proven to be secure against quantum side information was recently introduced in \cite{vidick}. This protocol, however, requires an almost perfect violation of the CHSH inequality, while our results and those of \cite{fgs} are generic and holds for arbitrary Bell inequalities and any amount of violation\footnote{Note that previous versions of these results (see \cite{smv2} and \cite{fgsv2}) claimed security against quantum side information, but both proofs were incorrect.}.

\section{Honest vs dishonest device suppliers and DIRE}
The security of device-independent cryptographic protocols is based on a rather limited sets of assumptions, e.g., that the devices obey quantum theory, that separated devices can be prevented to communicate with one other, that the users of the device have access to a private source of randomness, and so on. Provided that these basic assumptions are satisfied, the security follows independently of implementation details such as the precise quantum states and measurement operators used, or the dimension of the Hilbert space in which they are defined. It is often stressed that security could thus in particular be guaranteed if the devices had been provided or sabotaged by the adversary itself. This possibility is fascinating from a conceptual point of view and deserves to be investigated for its own sake. However, it has probably little (if no) practical relevance, as has already been pointed out (see e.g., \cite{talk}). 

One reason is that while it is in principle possible to enforce the assumptions required for the security of a DI cryptography scheme based on malicious devices, in practice this may involve incredible technological and physical resources. For instance, how can we practically guarantee that the devices do not covertly leak out sensitive information to the adversary \cite{dirng}? How can we guarantee that they do not contain sneaky transmitters?  In principle communications through electromagnetic waves can be screened, but what about communications based on neutrinos or gravitational waves? When a ``door" is opened to let a particle enter in a device, how can we efficiently prevent other particles to come out of the device?

More generally, any practical cryptographic implementation, classical, quantum, or device-independent, will include and make use of classical computing and communicating devices to process, store, and transmit data. These classical devices, which are probably easier to corrupt than their quantum counter-parts, cannot be guaranteed secure if they have been acquired from dishonest providers. One should therefore either acquire these classical devices from trusted suppliers or inspect them for malicious behaviour. But then why apply a different standard to the quantum devices?

The real problem, to which the concept of device-independence offers a potential solution, is that even if the quantum devices have been obtained from honest suppliers or thoroughly inspected, many things can still unintentionally go wrong. Indeed, in standard (i.e., device-dependent) quantum cryptography, conclusions about the randomness or the secrecy of the outputs crucially depends on the \emph{physical} properties of the generation process, for example, on the fact that the outputs were produced by measuring the polarization of a single photon along well-defined directions. But then, how can one assess the level of security provided by a real-life implementation of a standard quantum cryptography protocol, which will inevitably differ in undetermined ways from the idealized, theoretical description \cite{scarani}? Consider for instance that the reported attacks \cite{makarov,lo,lo2} on commercial QKD systems did not exploit any intentional, maliciouss flaws in the devices.

This problem is particularly acute in the case of (classical or quantum) RNG devices, as it is very difficult even for honest parties to construct reliable RNGs and monitor them for proper operation. The generation of randomness in a device-independent way solves many of the shortcomings of usual RNGs listed earlier, since it makes possible an accurate estimation of the amount of randomness generated independently of noise, imperfections, lack of knowledge, or limited control of the apparatuses. 

The use of device-independence, even in a trusted provider situation, has the advantage over a full device-dependent approach that it requires only the verification of a limited number of precisely defined assumptions, on which the manufacturer of the device can focus. Furthermore, these assumptions can be much more easily enforced or verified with respect to the situation where the devices come from a dishonest provider, as one does not need to fight against devices that have been maliciously programmed\footnote{Note in particular that it is highly unlikely that the attack reported in \cite{bck} would spontaneously occur in non-malicious devices.}
\cite{dirng}. For instance, in the experiment reported in \cite{dirng} no particular measures have been taken to screen-off one device from the other. However, the experiments involve two atoms that are confined in two independent vacuum chambers separated by about 1 meter. At this distance, direct interaction between the atoms is negligible and classical microwave and optical fields used to perform measurements on one atom
have no influence on the other atom. Based on this superficial description of the setup, one can
safely assume that the two quantum systems are independent and that no imperfections, failures, or implementation weaknesses would lead to direct interaction between the devices (though imperfections could lead to other potential problems that can be ruled out by the DI approach), and thus that the general formalism used to derive a bound on the randomness applies. 

In the case of DIRE, assuming that the devices originate from a honest provider has not only experimental implications, but also theoretical ones. The first one is that, while the adversary may possess an arbitrarily accurate  classical description of the internal working of the devices at any given moment of time, it is highly unlikely that he could possess any quantum system that is entangled with those inside the devices if he did not manufacture or tamper with them. This means that proving that the output are random with respect to \emph{classical-side} information is sufficient. 

The second implication is that the adversary cannot program the devices to exploit any prior knowledge about the initial randomness used to choose the inputs. The inputs must still be selected in a way that is independent from the internal functioning of the devices, but this condition can be satisfied without having recourse to cryptographically-secure random number generators. For instance, in the experiment reported in \cite{dirng}, the measurement settings were chosen by combining through a XOR function several public random number generators that
use randomness derived from radioactive decay \cite{rand1}, atmospheric noise \cite{rand2}, and remote computer and network activity \cite{rand3}. While a dishonest manufacturer aware of this procedure could have exploited it in the design of the set-up, it is highly unlikely that the state of the ions in the experiment of \cite{dirng} was in any way correlated to the choice of measurement bases. If this condition is satisfied, it is justified, however, to conclude that the outputs of the devices do represent new, private random bits.  

Remark that the two above implications are specific to DIRE but would not hold for most DI cryptographic protocols. This is due to the fact that DIRE is a single user protocol completely carried out in a single secure lab and which therefore does not allow for the possibility of interactive attacks by the adversary. In contrast, DIQKD, for instance, usually involves the sending of quantum information between Alice and Bob's devices. This quantum information can be intercepted by the adversary and entangled with his own quantum system. Furthermore any knowledge of the random numbers used in the protocol could be exploited by the adversary to improve the efficiency of this interaction. Even if the devices are completely trusted, it is therefore still the case that the security of QKD must be based on a proof that holds against quantum-side information and that the random numbers used in the protocol must be cryptographically secure.

In the following section we analyse DIRE from the perspective discussed above, and show in particular how to prove the security of a DIRE protocol against classical side-information. 

\section{DIRE against classical side-information}
We start by recalling some definitions and results that will be
used in the following. We refer to \cite{dirng,shaltiel,extractors} for
more details. 

\subsection{Preliminaries}\label{prel}

\textbf{Random variables.} Let $R$ be a random variable over the finite set $\mathcal{R}$ and $\mr{Pr}[R=r]=P_R(r)$ the probability that it takes the value $r$. (In the following, we use upper-case letters to denote random variables and lower-case letters to denote specific values taken by these variables). The closeness between two distributions $P_R$ and $Q_r$ can be quantified through the trace distance
\be\label{td}
d(P_R,Q_r)=\frac{1}{2}\sum_{{r}}\mid P_{R}({r})-Q_R(r)|\,.
\ee
For simplicity, we will write $P(r)$ for the probabilities $P_R(r)$, when there is no risk of confusion.
Let $E$ be a random variable representing some classical side-information about the variable $R$, and let the correlations between $R$ and $E$ be described by a joint distribution $\mr{Pr}[R=r,E=e]=P_{RE}(re)$. We say that $R$ is $\delta$-random with respect to $E$ if it is $\delta$-close to a uniform distribution uncorrelated to $E$, that is if
\be \label{csecr}
d(P_{RE},U_R\times Q_E)=
\frac{1}{2}\sum_{{r,e}}\mid
P_{RE}({re})-U_R({r})\times Q_E({e})|\leq \delta \ee
for some distribution $Q_E$, where $U_R({r})=1/{|\mathcal{R}|}$ is
the uniform probability distribution on $\mathcal{R}$.

\textbf{Min-entropy.} The randomness of $R$ with respect to $E$ can be quantified through the conditional min-entropy 
\be
\label{aventr} H_{\mr{min}}(R|E)_P
%=-\log_2p_{\mr{guess}}({R}|E)
=-\log_2\sum_{e\in\mathcal{E}} P_E(e)
\max_{r\in\mathcal{R}}\,P_{R|E}(r|e)\,. \ee The conditional
min-entropy (\ref{aventr}) is sometimes called the
\emph{average} conditional min-entropy to distinguish it from
the \emph{worst-case} conditional min-entropy defined by \be
\label{wsentr} \tilde H_{\mr{min}}(R|E)_P=-\log_2 \max_{r,e}
P_{R|E}(r|e)\,. \ee The worst-case min-entropy is a lower-bound
on the average min-entropy: $H_{\mr{min}}(R|E)_P\geq \tilde
H_{\mr{min}}(R|E)_P$. Note that when there is no
side-information $E$, both entropies reduce to the usual
definition $H_{\mr{min}}(R)_P=-\log_2 \max_{r\in\mathcal{R}}
P_R(r)$ for the classical min-entropy of a distribution $P_R$.

\textbf{Randomness extractors.} Given a $n$-bit string $R$ with a certain 
conditional min-entropy $k$ one can extract from it, using a randomness extractor and a small uniform seed $S$ , a new $m$-bit random string that is almost uniformly random. More formally, a function $\mr{Ext}:\{0,1\}^n\times\{0,1\}^d\rightarrow
\{0,1\}^m$ is a 
($m,k,\delta$)-strong extractor with uniform seed if for all distributions $P_{RE}$ with
$H_{\mr{min}}(R|E)_P\geq k$, and for a uniform seed $S\in\{0,1\}^d$, we have\footnote{Note that the definition of (classical) extractors does not usually involve side-information, but the definition given here and the conventional one can be shown to be essentially equivalent \cite{kt}.}
\be
\label{re}
d(P_{\mr{Ext}(R,S)SE},U_m\times P_S\times P_E)\leq \delta\,,
\ee
where $U_m$ is the uniform distribution on $\{0,1\}^m$. 
There exist different construction for randomness
extractors, characterized by different relations between the
parameters $n,m,d,k,\delta$. In particular, for any $k$ and $\delta$, there exist extractors with output length $m=k-4\log1/\delta-O(1)$ and seed length $d=O(\log^2(n/\delta)\log m)$ \cite{extractors}.

\textbf{Randomness and Bell experiments.} In \cite{dirng}, it was shown that there exists a fundamental, quantitative relation between the violation of Bell inequalities and the randomness produced in Bell experiments. We consider here for simplicity Bell experiments performed on two distinct systems $A$ and $B$, although our results generalize to more parties. We denote $V=(V^a,V^b)$ the measurement choices for systems $A$ and $B$ and assume that they each take values in a finite set $\mathcal{V}$. We denote the measurement outputs $X=(X^a,X^b)$ and assume that they each take values in the finite set $\mathcal{X}$. To any given input $V^a=v^a$, we can associate a set of measurement operators $\{M_A(x^a|v^a)\}_{x^a\in\mathcal{X}}$ such that $\sum_{x^a}M^\dagger_A(x^a|v^a)M_A(x^a|v^a)=I_A$, where $I_A$ is the identity operator on the Hilbert space $\mathcal{H}_A$ of system $A$. Similarly a set of measurement operators $M_B(x^b|v^b)$ can be associated to any given input $V^b=v^b$. The probability to obtain the pair of outputs $x=(x^a,x^b)$ given the pair of inputs $v=(v^a,v^b)$ when measuring a joint state $\rho_{AB}\in \mathcal{H}_A\otimes \mathcal{H}_B$ can then be written 
\be \label{qprob}
P(x|v)=\mr{tr}\left[M_{A}(x^a|v^a)\otimes M_B(x^b|v^b)\,\rho_{AB}\, M^\dagger_{A}(x^a|v^a)\otimes M^\dagger_B(x^b|v^b)\right]\,.
\ee 
A Bell expression $I$ is defined by a series of coefficients $c_{vx}$, which associate to a conditional probability distribution $P=\{P(x|v)\}$ the \emph{Bell expectation}
\begin{equation}
I[P]=\sum_{vx}c_{vx}P(x|v)\,.
\end{equation}
We denote by $I_q$ the maximal quantum Bell expectation, i.e., $I_q=\max_{P} I[P]$, where the maximum is taken over all distributions of the form (\ref{qprob}). 

In \cite{dirng} (see also \cite{nlrando}), it is shown that there exists a fundamental relation between the randomness of the distribution $P$ and the Bell expectation $I[P]$. More precisely, it is shown how using the semidefinite programming hierarchy introduced in \cite{sdp,sdp2}, one can compute for each $v$ a bound of the form 
%\be \label{prI}
%-\log_2 \mr{Pr}[xy|vw]\geq f(\mathbb{E}(I))\,,
%\ee
\be \label{prI}
\max_x P(x|v)\leq g(I[P])\,,
\ee
which is valid for any state $\rho_{AB}$ and measurement operators $M_A(x^a|v^a)$, $M_B(x^b|v^b)$ such that (\ref{qprob}) holds. Here $g$ is a function that is concave (if not, we take its concave hull) and monotonically decreasing, taking values between $1$ and $1/|\mathcal{X}|^2$. In particular, it is thus also logarithmically concave. The above bound can be rewritten as $H_\mr{min}(X|V=v)_P\geq f(I[P])$ where $H_\mr{min}(X|V=v)_P=-\log_2 \max_x P(x|v)$ is the min-entropy of $X$ for given $v$, and $f(I[P])=-\log_2 g(I[P])$. From now on we refer to $g$ (or $f=-\log_2 g$) as a randomness bound associated to $I$.

\subsection{Modelling of the devices and basic assumptions}
We consider a single pair of Bell violating devices $A$ and $B$ (though the results below can be directly generalized to a multipartite setting), in which, the user Alice can respectively introduce inputs $V=(V^a,V^b)$ (the ``measurement settings") and obtains output $X=(X^a,X^b)$ (the ``measurement outcomes"). The quantum apparatuses are used $n$ times in succession for varying choices of the inputs. In full generality, the behaviour of the devices can be characterized by
\begin{itemize}
\item  an initial state $\rho_{AB}\in \mathcal{H}_A\otimes \mathcal{H}_B$;
\item a set $\mathcal{M}_{AB}=\{M_{AB}(x|v)\}$ of measurement operators on $\mathcal{H}_A\otimes \mathcal{H}_B$,  which have the product form 
\be \label{prod}
M_{AB}(x|v)=M_A(x^a|v^a)\otimes M_B(x^b|v^b)\,,
\ee
and which define the measurements applied on the state of the devices for given input $v=(v^a,v^b)$;
\item a joint unitary operation $U\in \mathcal{H}_A\otimes \mathcal{H}_B$, which is applied on the post-measurement state of the devices after each measurement and which represents the possibility for the devices to communicate between successive measurements (e.g., to establish new entanglement).
\end{itemize}
Note that to simplify the notation, we did not explicitly introduce a dependence of $M_{AB}(x|v)$ or $U$ on the measurement round $i$ or on the inputs and outputs obtained in previous steps, i.e., $M_{AB}(x|v)$ and $U$ are identical at each use of the devices. The above formulation is nevertheless completely general and can account for the possibility that the behaviour of the devices varies from one round to another and makes use of an internal memory. Indeed, the measurement operators $M_{AB}(x|v)$ and the operation $U$ can encode the value of the inputs $v$ and the output $x$ obtained in a given run in the post measurement state of the devices and ``read" back this information in the next step to perform an operation conditional on the previous history. The only restrictive hypothesis that we make is that the measurement operators have the product form (\ref{prod}).
Physically, this means that the systems $A$ and $B$ do not communicate with each other during the measurement itself.

We assume that the behaviour of the devices, characterized by the initial state $\rho_{AB}$, the set of measurement operators $\mathcal{M}_{AB}$, and the joint operation $U$, is perfectly known to the adversary. Note that the behaviour of the devices might depend on some external random parameters known or controlled by the adversary. For instance, the quality of the components used to produce the devices might vary in a way known to the adversary or he might control some parameters (such as temperature or changes in the voltage of the power supply) that can influence the output of the devices. This can be taken into account by assuming that the devices and the adversary's information are in a joint state
\be \label{csi}
\rho_{ABE}=\sum_e P(e) \rho_{AB}^e\otimes |e\rangle\langle e|\,,
\ee
where $\rho_{AB}=\sum_e P(e) \rho_{AB}^e$ and $e$ represents the knowledge that the adversary has on the state of the devices. We refer in the following to $(\rho_{ABE},\mathcal{M}_{AB},U_{AB})$ as the \emph{device behaviour}. Our assumption of classical side-information lies in the fact that the devices and the adversary are only classically correlated. In general, i.e., in the case of quantum side-information, the state $\rho_{ABE}$ could be completely arbitrary.

As we said, the devices will be used $n$ times in succession.
Let $\mb{V}=(V_1,\ldots, V_n)=(V_1^a,V_1^b\ldots, V^a_n,V^b_n)$ denote the sequence of inputs employed in $n$ such successive uses and let $P(\mb{v})$ denote the probability of a particular sequence $\mb{V}=\mb{v}$.  We assume that the choice of inputs is independent of the device behaviour, i.e., that the inputs $\mb{V}$, the pair of devices $AB$, and the adversary's information $E$ can initially be characterized by the $cqc$-state
\be \label{vxeinit}
\rho_{\mb{V}}\otimes\rho_{ABE}=\sum_{\mb{v},e} P(\mb{v})P(e) |\mb{v}\rangle\langle\mb{v}|\otimes \rho_{AB}^e\otimes |e\rangle\langle e|\,.
\ee
After the $n$ uses of the devices, one obtains a sequence $\mathbf{X}=(X^a_1,X^b_1,\ldots,X^a_n,X^b_n)$ of output pairs. The resulting situation, and the correlations between the inputs $\mb{V}$, outputs $\mb{X}$, and the adversary's information $E$, can then be characterized by the joint distribution
%\be 
%\rho_{VXE}=\sum_{\mb{v},\mb{x},e} P(\mb{v})P(e)P(\mb{x}|\mb{v},e) |\mb{v}\rangle\langle\mb{v}|\otimes |\mb{x}\rangle\langle\mb{x}|\otimes |e\rangle\langle e|
%\ee
\be \label{P}
P(\mb{v}\mb{x}e)=P(\mb{v})P(e)P(\mb{x}|\mb{v},e)\,,
\ee 
where
\be\label{pxve}
P(\mb{x}|\mb{v},e)=\mr{tr}\left[\prod_{i=1}^n \left(U_{AB}\,M_A(x^a_i|v^a_i)\otimes  M_B(x^b_i|v^b_i)\right)\rho^e_{AB}\prod_{i=1}^n \left(M^\dagger_A(x^a_i|v^a_i)\otimes  M^\dagger_B(x^b_i|v^b_i)U^\dagger_{AB}\right)\right]
\ee
represents the response of the devices to given inputs $\mb{v}$ for a given value of the adversary's information $e$. 

In the following, we show how the level of Bell violation which is observed after $n$ repetitions of the experiment implies a bound on the min-entropy of the output string $\mb{X}$ conditioned on the input string $\mb{V}$ and the adversary's information $E$.  This bound depends only on the product assumption (\ref{prod}) characterizing the two devices, on the independence assumption (\ref{vxeinit}) between the choice of inputs and the state of the devices, and implicitly  on the condition (\ref{csi}) that the adversary's side-information is classical. Apart from these three assumptions, our results do not depend on any specific details of the device behaviour $(\rho_{ABE},\mathcal{M}_{AB},U_{AB})$.

\subsection{Bounding the min-entropy}
Suppose that the sequence of inputs $\mathbf{V}=(V^a_1,V^b_1,\ldots,V^a_n,V^b_n)$ is generated by choosing each pair of inputs $(V^a_i,V^b_i)$ independently with probability
$\mr{Pr}\left[V^a_i=v,V^b_i=w\right]=p_{vw}$, with $q=\min_{v,w} p_{vw}>0$.
Let $I$ be a Bell expression $I$ adapted to the input and output alphabet of the quantum devices. We then introduce the following Bell estimator
\be \bar I=\frac{1}{n}\sum_{i=1}^n I_i\,, \label{bari}\ee where
\be\label{Ii}
I_i=\sum_{xyvw} c_{xyvw}
\frac{\chi(X^a_i=x,X^b_i=y,V^a_i=v,V^b_i=w)}{p_{vw}}\,. \ee
Here,
$\chi(e)$ is the indicator function for the event $e$, that is,
$\chi(e)=1$ if the event $e$ is observed, $\chi(e)=0$ otherwise.
The series of coefficients $c_{xyvw}$ in (\ref{Ii}) define the Bell expression $I$. We assume that they satisfy $c=\max_{x,y,v,w}c_{xyvw}<\infty$. 

Let $\{J_m\,:\: 0\leq m\leq m_{\mr{max}}\}$ be a series of Bell violation thresholds, with $J_0$ corresponding to the local bound of the Bell expression and $J_{\mr{max}}=I_q$ to the maximum violation allowed by quantum theory.
We are going to put a bound on the min-entropy of the string $\mb{X}$ conditioned on the fact that the observed Bell average value $\bar I$ is comprised within some interval\footnote{This is the novel ingredient that fixes the issue in \cite{dirng}.} $J_m\leq \bar I< J_{m+1}$. We denote $P(m)$ the probability that the experiment returns a Bell average value comprised between $J_m\leq \bar I< J_{m+1}$ and $H_{\mr{min}}(\mb{X}|\mb{V}E,m)_P$ the min-entropy of $\mb{X}$ conditioned on $\mb{V}$ and $E$ given that a specific value $m$ has been obtained. The case $m=0$ corresponds to the situation where no substantial Bell violations is observed and no randomness is produced. 

\begin{theorem}\label{Theo1}
Suppose that the sequence of inputs $\mathbf{V}=(V^a_1,V^b_1,\ldots,V^a_n,V^b_n)$ is generated by choosing each pair of inputs $(V^a_i,V^b_i)$ independently with probability
$\mr{Pr}\left[V^a_i=v,V^b_i=w\right]=p_{vw}$, with $q=\min_{v,w} p_{vw}>0$. Let $\epsilon,\epsilon'>0$ be two arbitrary parameters.
Then for any device behaviour $(\rho_{ABE},\mathcal{M}_{AB},U)$, the resulting distribution $P=\{P(\mb{v}\mb{x}e)\}$ characterizing $n$ successive use of the devices is $\epsilon$-close to a distribution $Q$ such that
\begin{enumerate}
\item either $Q(m) \leq \epsilon'$,
\item or $H_{\mr{min}}(\mb{X}|\mb{V}E,m)_Q\geq nf(J_m-\mu)-\log_2\frac{1}{\epsilon'}$\,,
\end{enumerate}
where $f$ is a randomness bound associated to the Bell expression $I$ and 
\be
\mu=\left(\frac{c}{q}+I_q\right)\sqrt{\frac{2}{n}\ln\frac{1}{\epsilon}}\,.
\ee
\end{theorem}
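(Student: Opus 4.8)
The plan is to pass from the observed, fluctuating Bell average $\bar I$ to the quantity that actually governs the output probabilities, namely the average of the round-by-round \emph{conditional} Bell expectations, and then to make this relation hold deterministically by moving to a nearby distribution $Q$. First I would exploit the causal product structure to obtain a per-round bound. Writing $h_i=(\mb{v}_{<i},\mb{x}_{<i},e)$ for the history before round $i$, the chain rule gives $P(\mb{x}|\mb{v},e)=\prod_{i=1}^n P(x_i\mid v_i,h_i)$. Conditioned on $h_i$ and on the input $v_i$, the response of the devices in round $i$ is again of the quantum form (\ref{qprob}) (the memory operation $U$ and the past are absorbed into a conditional state on $\mathcal{H}_A\otimes\mathcal{H}_B$), so the randomness bound (\ref{prI}) applies: $\max_x P(x\mid v,h_i)\le g(\hat I_i)$, where $\hat I_i=\sum_{vx}c_{vx}P(x\mid v,h_i)$ is the Bell expectation of the round-$i$ conditional distribution. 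Hence $-\log_2 P(x_i\mid v_i,h_i)\ge f(\hat I_i)$, and summing over $i$ while using that $f=-\log_2 g$ is convex (Jensen), one gets $-\log_2 P(\mb{x}|\mb{v},e)\ge\sum_i f(\hat I_i)\ge n f(\bar{\hat I})$, with $\bar{\hat I}=\tfrac1n\sum_i\hat I_i$.

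Next I would control the gap between $\bar{\hat I}$ and the observable $\bar I$. Because the inputs are drawn independently of the past with $\Pr[V_i=(v,w)]=p_{vw}$, a direct computation shows $\hat I_i=\mathbb{E}[I_i\mid\mathcal{F}_{i-1}]$ for the filtration $\mathcal{F}_{i-1}=\sigma(e,V_1,X_1,\dots,V_{i-1},X_{i-1})$; thus $D_i=I_i-\hat I_i$ is a bounded martingale difference sequence with $|D_i|\le c/q+I_q$. Azuma--Hoeffding then yields $\Pr[\bar I-\bar{\hat I}>\mu]\le\epsilon$ exactly for $\mu=(c/q+I_q)\sqrt{(2/n)\ln(1/\epsilon)}$, which is where the stated $\mu$ and $\epsilon$ originate.

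The crux --- and the step that repairs the earlier formulation --- is to convert this probabilistic statement into a deterministic one. I would define $Q$ by setting it equal to $P$ on the ``good'' trajectories, on which $\bar I-\bar{\hat I}\le\mu$, and redistributing the remaining mass (at most $\epsilon$) elsewhere, so that $d(P,Q)\le\epsilon$ while $Q(\mb{x}|\mb{v},e)\le P(\mb{x}|\mb{v},e)$ on the good set. On the support of $Q$, every trajectory falling in bin $m$ (so $\bar I\ge J_m$) now satisfies $\bar{\hat I}\ge J_m-\mu$ deterministically, whence $P(\mb{x}|\mb{v},e)\le 2^{-nf(J_m-\mu)}$ there (using that $f$ is increasing).

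Finally I would insert this into the average conditional min-entropy of $Q$ conditioned on $m$: rewriting $\sum_{\mb{v},e}Q(\mb{v}e|m)\max_{\mb{x}}Q(\mb{x}|\mb{v}e,m)=\tfrac{1}{Q(m)}\sum_{\mb{v},e}\max_{\mb{x}}Q(\mb{v}e\mb{x},m)$, bounding each term by $P(\mb{v})P(e)\,2^{-nf(J_m-\mu)}$, and using $\sum_{\mb{v},e}P(\mb{v})P(e)=1$, gives a guessing probability at most $2^{-nf(J_m-\mu)}/Q(m)$, i.e. $H_{\mr{min}}(\mb{X}|\mb{V}E,m)_Q\ge nf(J_m-\mu)-\log_2(1/Q(m))$. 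The dichotomy then follows at once: either $Q(m)\le\epsilon'$, which is case 1, or $\log_2(1/Q(m))\le\log_2(1/\epsilon')$ and case 2 holds. The main obstacle is precisely this last deterministic conversion: one cannot legitimately substitute the empirical $\bar I$ for $\bar{\hat I}$ inside $g(\cdot)$ on a per-trajectory basis, and it is the combination of bin-conditioning with the passage to the nearby distribution $Q$ that makes the substitution rigorous and fixes the inaccurate step of \cite{dirng}.
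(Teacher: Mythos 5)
Your proposal is correct and follows essentially the same route as the paper: the chain rule plus per-round randomness bound combined through log-concavity of $g$ (the paper's Lemma~1), the Azuma--Hoeffding martingale estimate giving exactly the stated $\mu$ (Lemma~2), the passage to a nearby distribution $Q$ supported on the good trajectories (Lemma~3), and the identical bin-conditioned min-entropy computation yielding $H_{\mr{min}}(\mb{X}|\mb{V}E,m)_Q\geq nf(J_m-\mu)-\log_2(1/Q(m))$. The one point you leave vague --- where the excess mass (at most $\epsilon$) is redistributed --- is handled in the paper by enlarging the output range with an explicit abort symbol $\perp$ assigned to the bin $m=0$, which keeps $Q$ normalized without ever placing weight on trajectories that violate the per-trajectory bound.
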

This result tells us that the classical distribution $P$ characterizing the outputs $\mb{X}$ of the devices and their correlations with the inputs $\mb{V}$ and the adversary's information $E$ is essentially indistinguishable from a distribution $Q$ such that if the observed violation lies within the interval  $J_m\leq \bar I< J_{m+1}$ with non-negligible probability, then we have the guarantee that the outputs contain a certain amount of entropy, roughly given by $nf(J_m)$ up to epsilonic corrections (remark the term $-\log_2 1/\epsilon'$ in the bound on the min-entropy $H_{\mr{min}}(\mb{X}|\mb{V}E,m)_Q$ which was missing in \cite{dirng}).
Note that the fact that the trace distance cannot increase under classical-processing operations guarantees that any claim about the string $\mb{X}$ (or any subsequent use thereof) which is based on the properties of the distribution $Q$ will also hold for the distribution $P$ up to a correction $\epsilon$ (see subsection 3.4 for more details).

\paragraph{Proof of Theorem 1.}
In the following, we write $\mb{v}_i=(v^a_1,v^b_1\ldots,v^a_i,v^b_i)$ for the collection of input pairs up to round $i$, and similarly for $\mb{x}_i$. We denote $\mathbb{E}(I_i|\mb{x}_{i-1},\mb{v}_{i-1},e)$ the expectation of the random variable $I_i$ defined in (\ref{Ii}) conditioned on $(\mb{x}_{i-1},\mb{v}_{i-1},e)$, where the expectation is taken with respect to the probability distribution $P$. The following Lemma puts a bound on the probabilities $P(\mb{x}|\mb{v},e)$. 
\begin{lemma}\label{logp}
Let $G_\mu=\{(\mb{x},\mb{v},e)\mid \frac{1}{n}\sum_{i=1}^n
\mathbb{E}(I_i|\mb{x}_{i-1},\mb{v}_{i-1},e)\geq
\bar I(\mb{x},\mb{v})-\mu\}$, where $\mu\in\mathbb{R}$ is some real parameter.
Then for any $(\mb{x},\mb{v},e)\in G_\mu$,
\be \label{probfi}
P(\mb{x}|\mb{v},e)\leq g^n(\bar I(\mb{x},\mb{v})-\mu)\,.
\ee
\end{lemma}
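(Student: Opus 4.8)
The plan is to bound $P(\mb{x}|\mb{v},e)$ by writing it as a product of single-round conditional probabilities, controlling each factor with the single-round randomness bound (\ref{prI}), and then recombining the factors using the logarithmic concavity and the monotonicity of $g$. First I would invoke the chain rule to factorize
\[
P(\mb{x}|\mb{v},e)=\prod_{i=1}^n P(x_i\mid \mb{x}_{i-1},\mb{v}_i,e)\,.
\]
A point worth justifying here is that each factor may be conditioned on the past inputs $\mb{v}_i$ alone, rather than on the full input sequence $\mb{v}$. This follows from the explicit expression (\ref{pxve}): summing the round-$j$ measurement operators for $j>i$, which satisfy the completeness relation, shows that the marginal $P(\mb{x}_i\mid\mb{v},e)$ is independent of the future inputs $v_{i+1},\ldots,v_n$, i.e.\ the devices cannot signal from the future into the past.

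The heart of the argument is the per-round bound. For a fixed history $(\mb{x}_{i-1},\mb{v}_{i-1},e)$, the (re-normalized) post-measurement state of the devices is again a legitimate state in $\mathcal H_A\otimes\mathcal H_B$, and by the product assumption (\ref{prod}) the round-$i$ measurement still factorizes across $A$ and $B$. Hence the effective conditional distribution $P_i^{\mr{eff}}(x_i|v_i):=P(x_i\mid v_i,\mb{x}_{i-1},\mb{v}_{i-1},e)$ is of the quantum form (\ref{qprob}), so the bound (\ref{prI}) applies and gives $P(x_i\mid \mb{x}_{i-1},\mb{v}_i,e)\le g\big(I[P_i^{\mr{eff}}]\big)$. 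It then remains to identify $I[P_i^{\mr{eff}}]$ with the conditional expectation of the estimator (\ref{Ii}). Because the inputs are drawn independently with probabilities $p_{vw}$ (assumption (\ref{vxeinit})), averaging (\ref{Ii}) over the round-$i$ input and output makes the weight $1/p_{vw}$ cancel the input probability exactly, yielding $\mathbb E(I_i\mid\mb{x}_{i-1},\mb{v}_{i-1},e)=I[P_i^{\mr{eff}}]$. Combining, every factor obeys $P(x_i\mid\mb{x}_{i-1},\mb{v}_i,e)\le g\big(\mathbb E(I_i\mid\mb{x}_{i-1},\mb{v}_{i-1},e)\big)$.

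Finally I would take logarithms and use that $g$ is logarithmically concave. By Jensen's inequality applied to the concave function $\log_2 g$,
\[
\frac1n\sum_{i=1}^n \log_2 g\big(\mathbb E(I_i\mid\cdots)\big)\le \log_2 g\!\left(\frac1n\sum_{i=1}^n\mathbb E(I_i\mid\cdots)\right),
\]
so that $P(\mb{x}|\mb{v},e)\le g^n\!\big(\tfrac1n\sum_i\mathbb E(I_i\mid\cdots)\big)$. On the set $G_\mu$ the average conditional expectation is at least $\bar I(\mb{x},\mb{v})-\mu$, and since $g$ is monotonically decreasing this gives $P(\mb{x}|\mb{v},e)\le g^n(\bar I(\mb{x},\mb{v})-\mu)$, as claimed.

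I expect the main obstacle to be the per-round step: rigorously arguing that conditioning on an arbitrary observed history leaves a bona fide quantum behaviour to which (\ref{prI}) is applicable — in particular that the product structure (\ref{prod}) survives the conditioning in the presence of the internal memory and the joint unitary $U$ — and carefully verifying the importance-sampling identity $\mathbb E(I_i\mid\cdots)=I[P_i^{\mr{eff}}]$. This is precisely where the product assumption and the input-independence assumption are used; the subsequent Jensen-plus-monotonicity step is then routine.
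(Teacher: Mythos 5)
Your proposal is correct and follows essentially the same route as the paper's own proof: the chain-rule factorization into per-round conditionals depending only on the past, the observation that each conditioned round is again a bona fide quantum behaviour of the product form (\ref{qprob}) so that (\ref{prI}) applies, the cancellation of the $1/p_{vw}$ weights giving $I[P_{\mb{x}_{i-1},\mb{v}_{i-1},e}]=\mathbb{E}(I_i|\mb{x}_{i-1},\mb{v}_{i-1},e)$, and the final Jensen (log-concavity) plus monotonicity step. The only difference is presentational: you spell out the no-signaling-from-the-future step via the completeness relation, which the paper asserts in words.
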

\begin{proof} 
Using successively Bayes's rule and (\ref{pxve}), we can write
\be 
P(\mb{x}|\mb{v},e)=\prod_{i=1}^n P(x_i|v_i,\mb{x}_{i-1},\mb{v},e)
 =\prod_{i=1}^n P(x_i|v_i,\mb{x}_{i-1},\mb{v}_{i-1},e)\,.
\ee
The second equality simply expresses the fact that the outputs at round $i$ are determined only by the inputs at round $i$ and by the past inputs and outputs, but not by future inputs.
Note furthermore that we can write 
\begin{eqnarray}
P(x_i|v_i,\mb{x}_{i-1},\mb{v}_{i-1},e)&=&P(x^a_i,x^b_i|v^a_i,v^b_i,\mb{x}_{i-1},\mb{v}_{i-1},e)\nonumber\\
&=&\mr{tr}[M_A(x^a_i|v^b_i)\otimes M_B(x^b_i|v^b_i)\,\rho^{e,\mb{x}_{i-1},\mb{v}_{i-1}}_{AB}\,M^\dagger_A(x^a_i|v^b_i)\otimes M^\dagger_B(x^b_i|v^b_i)]\,,\nonumber \\ \label{ass2}
\end{eqnarray}
where $\rho^{e,\mb{x}_{i-1},\mb{v}_{i-1}}_{AB}$ denotes the state of the devices conditioned on previous inputs and outputs. Applying the randomness bound (\ref{prI}) to the probability distribution $P_{\mb{x}_{i-1},\mb{v}_{i-1},e}=\{P(x_i|v_i,\mb{x}_{i-1},\mb{v}_{i-1},e)\}$ implies that $P(x_i|v_i,\mb{x}_{i-1},\mb{v}_{i-1},e)\leq g(I[P_{\mb{x}_{i-1},\mb{v}_{i-1},e}])$. 
Using the fact that $P(v^a_i=v,v^b_i=w|\mb{x}_{i-1},\mb{v}_{i-1},e)=p_{vw}$ which follows from (\ref{vxeinit}) and the fact that each pair of inputs $(V^a_i,V^b_i)$ is generated independently with probability
$\mr{Pr}\left[V^a_i=v,V^b_i=w\right]=p_{vw}$, it is easily verified that $I[P_{\mb{x}_{i-1},\mb{v}_{i-1},e}]=\sum_{xyvw}c_{xyvw}P(xy|vw,\mb{x}_{i-1},\mb{v}_{i-1},e)=\mathbb{E}(I_i|\mb{x}_{i-1},\mb{v}_{i-1},e)$. We therefore have
\be 
P(\mb{x}|\mb{v},e) \leq \prod_{i=1}^n  g(\mathbb{E}(I_i|\mb{x}_{i-1},\mb{v}_{i-1},e))\leq g^n(\frac{1}{n}\sum_{i=1}^n\mathbb{E}(I_i|\mb{x}_{i-1},\mb{v}_{i-1},e))
\ee
where we used that $g$ is logarithmically concave in the second inequality. Using the definition of $G_\mu$ and the fact that $g$ is monotonically decreasing , we get (\ref{probfi}).
\end{proof}
\begin{lemma}\label{azuma}
For any $\epsilon>0$, let 
\be\label{mu}
\mu=\left(\frac{c}{q}+I_q\right)\sqrt{\frac{2}{n}\ln\frac{1}{\epsilon}}\,.
\ee
Then
\be 
\mr{Pr}[G_\mu]=\sum_{(\mb{x},\mb{v},e)\in G_\mu} P(\mb{x},\mb{v},e) \geq 1-\epsilon\,.
\ee
\end{lemma}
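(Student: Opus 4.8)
}
The plan is to recognize the complementary event $G_\mu^c$ as a one-sided large-deviation event for a suitable martingale and to control it with the Azuma--Hoeffding inequality. It is cleanest to fix the adversary's value $E=e$ (which is classical and realised once, at the outset) and to work conditionally on it, using the filtration $\mathcal{F}_i=\sigma(\mb{V}_i,\mb{X}_i)$ generated by the inputs and outputs up to round $i$; the final averaging over $e$ with weight $P(e)$ is then trivial if the bound is uniform in $e$. By the definition of $G_\mu$, the event $G_\mu^c$ reads $\frac{1}{n}\sum_{i=1}^n\mathbb{E}(I_i|\mb{x}_{i-1},\mb{v}_{i-1},e)<\bar I(\mb{x},\mb{v})-\mu$, and since $\bar I=\frac{1}{n}\sum_i I_i$, this is exactly the statement that $\sum_{i=1}^n Y_i>n\mu$, where $Y_i:=I_i-\mathbb{E}(I_i|\mb{x}_{i-1},\mb{v}_{i-1},e)$. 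It therefore suffices to prove $\mr{Pr}[\sum_i Y_i\ge n\mu\mid E=e]\le\epsilon$ uniformly in $e$.

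First I would verify that $\{Y_i\}$ is a martingale difference sequence for $(\mathcal{F}_i)$. The variable $I_i$ is $\mathcal{F}_i$-measurable, and $\mathcal{F}_{i-1}$ encodes precisely $(\mb{x}_{i-1},\mb{v}_{i-1})$ together with the fixed $e$, so $\mathbb{E}(Y_i|\mathcal{F}_{i-1})=0$ by construction; hence $S_k=\sum_{i\le k}Y_i$ is a martingale with $S_0=0$. Here it is essential that the conditioning in $G_\mu$ stops at round $i-1$, so that $v_i$ is \emph{not} $\mathcal{F}_{i-1}$-measurable. Next I would establish the bounded-difference constant. Because exactly one configuration $(x,y,v,w)$ is realised in round $i$, the estimator obeys $0\le I_i\le c/q$ (a single coefficient, at most $c$, divided by a probability at least $q$). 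Moreover, as already shown in the proof of Lemma~\ref{logp}, the conditional mean $\mathbb{E}(I_i|\mb{x}_{i-1},\mb{v}_{i-1},e)$ equals the Bell expectation $I[P_{\mb{x}_{i-1},\mb{v}_{i-1},e}]$ of a genuine quantum distribution and is thus at most $I_q$. Combining the two ranges gives $|Y_i|\le c/q+I_q$.

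Finally I would invoke the one-sided Azuma--Hoeffding bound: for a martingale with increments bounded by $c_i=c/q+I_q$, one has $\mr{Pr}[S_n\ge n\mu]\le\exp\!\left(-\frac{(n\mu)^2}{2\sum_i c_i^2}\right)=\exp\!\left(-\frac{n\mu^2}{2(c/q+I_q)^2}\right)$. Substituting $\mu=(c/q+I_q)\sqrt{(2/n)\ln(1/\epsilon)}$ makes the exponent equal to $-\ln(1/\epsilon)$, so the probability is at most $\epsilon$, and averaging this uniform-in-$e$ estimate over $P(e)$ yields $\mr{Pr}[G_\mu]\ge1-\epsilon$. I expect the difficulties to be bookkeeping rather than conceptual: matching the filtration to the conditioning in $G_\mu$, getting the direction of the one-sided deviation right, and pinning down the ranges (so that the increment bound is exactly $c/q+I_q$ and the constant in $\mu$ comes out as stated, which presumes the Bell coefficients and expectations are normalised to be non-negative). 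The reuse of the identity $\mathbb{E}(I_i|\mb{x}_{i-1},\mb{v}_{i-1},e)=I[P_{\mb{x}_{i-1},\mb{v}_{i-1},e}]$ from Lemma~\ref{logp} is what links this purely probabilistic concentration step back to the quantum structure of the problem.
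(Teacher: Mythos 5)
Your proposal is correct and takes essentially the same route as the paper: the paper builds the martingale $Z_k=\sum_{i\leq k}\bigl(I_i-\mathbb{E}(I_i|W_{i-1})\bigr)$ with respect to the filtration $W_{k-1}=(\mb{X}_{k-1},\mb{V}_{k-1},E)$ (equivalent to your device of fixing $E=e$ and averaging at the end), bounds the increments by $c/q+I_q$ via the same two ingredients ($|I_i|\leq c/q$ and the conditional mean being a quantum Bell expectation bounded by $I_q$), and applies the one-sided Azuma--Hoeffding inequality with the same substitution of $\mu$. Even the subtlety you flag about sign conventions for the coefficients $c_{xyvw}$ is implicitly present in the paper's bound $|I_k|\leq c/q$, so there is no substantive difference between the two arguments.
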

\begin{proof}
Consider the list of random variables ${Z}_0,\ldots,{Z}_n$, where ${Z}_0=0$ and
\be 
Z_k=\sum_{i=1}^k \left(I_k-\mathbb{E}(I_k|{W}_{k-1})\right)
\ee
for $k\geq 1$, where ${W}_{k-1}=(\mb{X}_{k-1},\mb{V}_{k-1},E)$ and $W_0=E$.
Since $|I_k|\leq c/q$ with $c< \infty$ and $q>0$, we have that $|{Z}_k|\leq 2kc/q< \infty$ is bounded for all $k$. Moreover, the differences $|Z_{k+1}-Z_k|$ are bounded by $|Z_{k+1}-Z_k|=|I_{k+1}-\mathbb{E}(I_{k+1}|{W}_k)|\leq |I_{k+1}|+|\mathbb{E}(I_{k+1}|{W}_k)|\leq c/q + I_q$, where we used (\ref{vxeinit}) and the fact that each pair of
inputs $(V^a_i,V^b_i)$ is generated independently with probability
$\mr{Pr}\left[V^a_i=v,V^b_i=w\right]=p_{vw}$. Finally, it is easily verified that $\mathbb{E}(Z_{k+1}|{W}_k)=Z_k$ for all $0\leq k\leq {n-1}$. The variables ${Z}_0,\ldots,{Z}_n$ thus form a martingale with respect to (the filtration induced by) ${W}_0,\ldots,{W}_{n-1}$. We can therefore apply  Azuma-Hoeffding inequality~\cite{GS}, which yields
\be
\mr{Pr}\left[Z_n-Z_0\geq n\mu\right]=\mr{Pr}\left[\frac{1}{n}\sum_{i=1}^n \mathbb{E}(I_i|{W}_{i-1})\leq \bar I-\mu\right]\leq \exp\left(\frac{-n\mu^2}{2(c/q+I_q)^2}\right)=\epsilon\,,
\ee
which gives the desired claim given the definition of $G_\mu$. 
\end{proof}
So far, we have (implicitly) considered the random variable sequence $\mb{X}$ as taking value in the output space $\mathcal{X}^{n}=\mathcal{X}\times\ldots\times\mathcal{X}$. We now formally extend its range and view it as an element of $\mathcal{X}^n\cup\perp$ (with $P(\mb{x}|\mb{v}e)=0$ if $\mb{x}=\perp$. We can interpret $\perp$ as an ``abort-output" produced by the devices implying that no violation has been obtained (i.e. $m=0$ if $\mb{x}=\perp$). 
\begin{lemma}\label{lemq}
There exists a probability distribution
$Q=\{Q(\mb{x},\mb{v},e)\}$ that is $\epsilon$-close to
$P$ satisfying
\be \label{logq}
Q(\mb{x}|\mb{v},e)\leq g^n(\bar I(\mb{x},\mb{v})-\mu)\,.
\ee
for all $(\mb{x},\mb{v},e)$ such that $\mb{x}\neq \perp$, with $\mu$ given by (\ref{mu}).
\end{lemma}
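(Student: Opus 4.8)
The plan is to obtain $Q$ from $P$ by a minimal surgery: leave untouched every configuration whose probability already obeys the target bound, and sweep the offending mass onto the dummy symbol $\perp$, which carries no constraint under (\ref{logq}). Concretely, for every $(\mb{v},e)$ and every $\mb{x}\neq\perp$ I would set
\[
Q(\mb{x},\mb{v},e)=\begin{cases}P(\mb{x},\mb{v},e)&\text{if }(\mb{x},\mb{v},e)\in G_\mu,\\[2pt]0&\text{otherwise,}\end{cases}
\]
and then define $Q(\perp,\mb{v},e)=\sum_{\mb{x}:(\mb{x},\mb{v},e)\notin G_\mu}P(\mb{x},\mb{v},e)$ so as to reabsorb exactly the mass deleted on the complement of $G_\mu$. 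Because this reshuffling happens within each fiber of fixed $(\mb{v},e)$, the input/side-information marginal is preserved, $Q(\mb{v},e)=P(\mb{v},e)$, and $Q$ is manifestly a normalised probability distribution (recall that $P(\perp,\mb{v},e)=0$ to begin with).

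With $Q$ so defined the bound (\ref{logq}) is immediate. For $\mb{x}\neq\perp$ the conditional is $Q(\mb{x}|\mb{v},e)=Q(\mb{x},\mb{v},e)/P(\mb{v},e)$; if $(\mb{x},\mb{v},e)\in G_\mu$ this equals $P(\mb{x}|\mb{v},e)$, which is at most $g^n(\bar I(\mb{x},\mb{v})-\mu)$ by Lemma~\ref{logp}, while if $(\mb{x},\mb{v},e)\notin G_\mu$ it vanishes and the inequality holds trivially since $g^n\geq 0$. The only freedom exploited is that (\ref{logq}) imposes nothing on the $\perp$ outcome, which is precisely why $\perp$ was introduced as an absorbing ``abort'' symbol.

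It then remains to check that this surgery is cheap in trace distance. The distributions $P$ and $Q$ differ only on the bad points with $\mb{x}\neq\perp$ (where $Q$ was zeroed out) and on the $\perp$ outcomes (where the deleted mass reappears). Summing $|P-Q|$ over the first group gives $\mr{Pr}[\,\overline{G_\mu}\,]$, and summing over the second gives the same quantity, so $d(P,Q)=\tfrac12\cdot 2\,\mr{Pr}[\,\overline{G_\mu}\,]=\mr{Pr}[\,\overline{G_\mu}\,]$. By Lemma~\ref{azuma}, with $\mu$ as in (\ref{mu}), one has $\mr{Pr}[G_\mu]\geq 1-\epsilon$, hence $\mr{Pr}[\,\overline{G_\mu}\,]\leq\epsilon$ and $d(P,Q)\leq\epsilon$, as required.

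There is little genuine difficulty once Lemmas~\ref{logp} and~\ref{azuma} are in hand: together they say that $P$ already satisfies the desired per-conditional bound on a set of probability at least $1-\epsilon$, and all that remains is to repair it on the remaining $\epsilon$-fraction. The one point demanding care is the bookkeeping with $\perp$ — one must move the bad mass fiber-by-fiber at fixed $(\mb{v},e)$ rather than globally, so that the $(\mb{v},e)$-marginal is left intact and $Q(\mb{x}|\mb{v},e)$ is both well defined and correctly bounded; and one must confirm that the exemption of $\perp$ from (\ref{logq}) is exactly what makes the construction consistent.
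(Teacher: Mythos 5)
Your proposal is correct and is essentially identical to the paper's own proof: both define $Q$ by keeping $P$ on $G_\mu$ fiber-by-fiber at fixed $(\mb{v},e)$, zeroing the mass outside $G_\mu$, dumping it onto $\perp$, and then bounding $d(P,Q)$ by $\mr{Pr}[\,\overline{G_\mu}\,]\leq\epsilon$ via Lemma~\ref{azuma}, with the pointwise bound (\ref{logq}) coming from Lemma~\ref{logp}. No substantive differences to report.
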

\begin{proof}
Define $Q$ as $Q(\mb{x},\mb{v},e)=P(\mb{v})P(e)Q(\mb{x}|\mb{v},e)$, where $Q(\mb{x}|\mb{v},e)=P(\mb{x}|\mb{v},e)$ if $(\mb{x},\mb{v},e)\in G_\mu$, $Q(\mb{x}|\mb{v},e)=0$ if $\mb{x}\neq \perp$ and $(\mb{x},\mb{v},e)\notin G_\mu$, and $Q(\perp|\mb{v},e)=1-\sum_{\mb{x}\neq G_\mu}P(\mb{x}|\mb{v},e)$. By Lemma~\ref{logp}, the distribution $Q$ satisfies (\ref{logq}) for all $(\mb{x},\mb{v},e)$ such that $\mb{x}\neq \perp$. Application of Lemma~\ref{azuma} gives $d(P,Q)=\frac{1}{2}\sum_{\mb{x},\mb{v},e}|P(\mb{x},\mb{v},e)-Q(\mb{x},\mb{v},e)|=\frac{1}{2}\sum_{\mb{v},e}P(\mb{v},e)\sum_{\mb{x}}|P(\mb{x}|\mb{v},e)-Q(\mb{x}|\mb{v},e)|= \frac{1}{2}(\sum_{\mb{x},\mb{v},e\notin G_\mu}P(\mb{x},\mb{v},e)+1-\sum_{\mb{x},\mb{v},e\in G_\mu}P(\mb{x},\mb{v},e))\leq\epsilon$.
\end{proof}
Let $Q(m)$ be the probability (according to the distribution $Q$) that $J_m\leq \bar I< J_{m+1}$. Let $Q(\mb{x},\mb{v},e|m)$ denote the distribution of $\mb{X},\mb{V},E$ conditioned on a particular value of $m$ and let
\be\label{hminq}
H_\mr{min}(\mb{X}|\mb{V}E,m)_Q=-\log_2\sum_{\mb{v},e}
Q(\mb{v},e|m)\max_{\mb{x}}Q(\mb{x}|\mb{v},e,m)
\ee
be the min-entropy of the raw string $\mb{X}$ conditioned on ($\mb{V},E$) for a given $m$. Let $K_m=\{\mb{x}\,|\,\mb{x}\neq \perp \text{ and }J_{m}\leq \bar I(\mb{x},\mb{v})<J_{m+1}\}$. By Lemma~\ref{lemq} and the fact that the $g$ is monotically decreasing, we have
\begin{eqnarray}
\max_{\mb{x}}Q(\mb{x}|\mb{v},e,m)&=&\frac{1}{Q(m|\mb{v},e)}\max_{\mb{x}\in K_m}Q(\mb{x}|\mb{v},e)\\
%&=&\frac{1}{Q(m|\mb{v},e)}\max_{\substack{ \mb{x} \text{ s.t.}\\
%\mb{\neq\perp}, J_m\leq I(\mb{x},\mb{v})<J_{m+1}}}Q(\mb{x}|\mb{v},e)\\
&\leq& \frac{g^n(J_m-\mu)}{Q(m|\mb{v},e)}\,.\label{maxm}
\end{eqnarray}
Inserting this back in (\ref{hminq}) gives
\begin{eqnarray} H_\mr{min}(\mb{X}|\mb{V}E,m)_Q&\geq
&-\log_2\sum_{\mb{v},e}
\frac{Q(\mb{v},e|m)}{Q(m|\mb{v},e)}g^n(J_m-\mu)\\
&=&-\log_2\sum_{\mb{v},e}
\frac{Q(\mb{v},e)}{Q(m)}g^n(J_m-\mu)\\
&=&nf(J_m-\mu)-\log_2\frac{1}{Q(m)}\,,
\end{eqnarray}	
where we remind that $f=-\log_2 g$. This immediately implies Theorem~1.

\subsection{Application to DIRE protocols}
Theorem~1 can directly be applied to prove the security of various DIRE protocols. Formally, a randomness expansion protocol is a protocol that, starting from a $d$-bit uniform random seed $S$, generates a $m$-bit string $R$ that is close to uniformly random and uncorrelated from any potential
adversary.  The length $m$ of the output string is
variable and determined during the run of the protocol. The protocol may also abort, in which case we set $m=0$ and $R=\emptyset$. We can assume that $m$ is made public at the end of the protocol.

The protocol will involve the use of Bell-violating devices and some classical processing on the outputs of the devices. For example a straightforward protocol directly based on the simple Bell experiment described so far is described below. But one could also consider more complicated protocols involving multiple pairs of Bell-violating devices, where this simple primitive is repeated or concatenated, see \cite{smv2,fgsv2}.

\begin{enumerate}
\item[\textbf{1}.]\textbf{Input generation:}
Alice generates a sequence of input pairs
$\mathbf{V}=(V^a_1,V^b_1,\ldots,V^a_n,V^b_n)$ according to the (non-uniform) distribution specified in the statement of Theorem~1. This can be achieved starting from a uniform random seed $S_{\mr{inp}}$ with a small error $\epsilon_{\mr{inp}}$ and small entropy loss (see \cite{ct,ky} and the Appendix). 
\item[\textbf{2}.]\textbf{Use of the devices:}
She introduces inputs $V^a_i$ and $V^b_i$ in the two
devices and obtains outputs $X^a_i$ and $X^b_i$. This step is
repeated $n$ times, resulting in the sequence of output pairs
$\mathbf{X}=(X^a_1,X^b_1,\ldots,X^a_n,X^b_n)$.
\item[\textbf{3}.]\textbf{Estimation of the Bell expression:}
Alice computes the average Bell expression (\ref{bari}) and 
determines the value of $m$ such that $J_m\leq \bar I< J_{m+1}$. If $m=0$, she aborts.
\item[\textbf{4}.]\textbf{Randomness extraction:} Using a random seed $S_\mr{ext}$, Alice applies a $(m,k_m,\epsilon_{\mr{ext}})$-randomness extractor to the raw
string ${\mb{X}}$ with $k_m=nf(J_m-\mu)-\log_2 m_{\mr{max}}-\log_2\frac{1}{\epsilon'}$  
and obtains a string $R=\mr{Ext}(\mb{X},S_\mr{ext})$, which represents the output of the protocol. 
We can assume that $m$, $\mb{V}$, and $S_{\mr{ext}}$ are made public.
\end{enumerate}
In the above description, we have of course implicitly assumed that the thresholds $J_m$, the parameter $\epsilon$ (which determine $\mu$), $\epsilon'$, and $\epsilon_{\mr{ext}}$ are chosen in such a way that they define a proper $(m,k_m,\epsilon_{\mr{ext}})$-randomness extractor for all values of $m=1,\ldots,m_\mr{max}$.

Let $F=(\mb{V},S_{\mr{ext}},E)$ denote the final side-information of the adversary. Following the definition of security in the context of quantum key distribution outlined in \cite{ucqkd,charles}, we say that a protocol such as the one just presented is secure if, for any device behaviour and any $m$, the
output $R$ is uniformly random and independent from $F$. This means that the distribution $P^{\mr{perf}}_{RFM}$ characterizing the output $R$, the side-information $F$ and the final length $M$ of a perfectly secure
protocol has the form
\be \label{perf}
P_{RFM}^{\mr{perf}}(rfm)=P_M(m)\times P_{RF|M}(rf|m)\quad \text{with } P_{RF|M}(rf|m)=U_m(r)\times P_{f|M}(f|m)
\ee
where $U_m$ is the uniform distribution on $\{0,1\}^m$.
A real DIRE protocol is said to be
$\epsilon_{\mr{sec}}$\emph{-secure} if it is
$\epsilon_{\mr{sec}}$-indistinguishable from a secure protocol,
that is, if for any device behaviour, the joint distribution $P_{RFM}$ satisfies
\be \label{real}
d(P_{RFM},P_{RFM}^{\mr{perfect}})\leq 
\epsilon_{\mr{sec}} \ee
for some distribution 
$P_{RFM}^{\mr{perfect}}$ of the form (\ref{perf}). In
particular, a DIRE protocol is $\epsilon_{\mr{sec}}$-secure
if, for any device behaviour, it outputs $m$-bit strings
that are $\delta_m$-random with respect to $E$ with \be
\label{cond} \sum_{m=1}^{m_{\mr{max}}} P_M(m)\, \delta_m\leq
\epsilon_{\mr{sec}}\,,\ee
where $m_{\mr{max}}$ denotes the
maximal output length. 

To show that the protocol defined above is secure according to this definition, suppose that at the end of Step 2, after the $n$ uses of the devices, the correlations between outputs $\mb{X}$, inputs $\mb{V}$, and the adversary's prior information $E$ are characterized by the probability distribution $Q_{\mb{X}\mb{V}E}$ defined in the statement of Theorem~1. Then it is easy to show that the distribution $Q_{RFM}=Q_{G(\mb{X},\mb{V},S_\mr{ext})FM}$ characterizing the final output of the protocol (where $G$ is the classical processing describing the steps performed after the $n$ uses of the devices) is $(\epsilon'+\epsilon_\mr{ext})$-close to a perfectly secure distribution $\tilde Q_{RFM}$. Indeed, let $M_<$ be the values of $m$ such that $Q(m)\leq {\epsilon'}/{m_{\max}}$ and $M_>$ those for which $Q(m)\geq {\epsilon'}/{m_{\max}}$. For all $m\in M_>$, the min-entropy $H_\mr{min}(\mb{X}|\mb{V}E,m)_Q$ can thus be bounded by
\be \label{hminm}
H_\mr{min}(\mb{X}|\mb{V}E,m)_Q\geq nf(J_m-\mu)-\log_2 m_{\mr{max}}-\log_2\frac{1}{\epsilon'}\,.
\ee 
Applying a $(m,k_m,\epsilon_{\mr{ext}})$-randomness extractor to the string
$\mb{X}$ with $k_m$ given by the right-hand side of (\ref{hminm}) therefore yields a string that is $\delta_m$-close to a random string, with $\delta_m\leq \epsilon_{\mr{ext}}$ for $m\in M_>$ and $\delta_m\leq 1$ for $m\in M_<$. On average, we thus have
\be 
\sum_m Q(m)\delta_m\leq \sum_{m\in M_<}Q(m) +\sum_{m\in M>} Q(m)\epsilon_{\mr{ext}}\leq \sum_{m\in M_<} \frac{\epsilon'}{m_\mr{max}} +\sum_{m\in M_>} Q(m)\epsilon_{\mr{ext}}\leq \epsilon'+\epsilon_{\mr{ext}}\,.
\ee
Since the actual distribution $P_{\mb{X}\mb{V}E}$ characterizing the output of the device is $\epsilon$-close to 
$Q_{\mb{X}\mb{V}E}$, it directly follows that it provides an  $(\epsilon+\epsilon'+\epsilon_{\mr{ext}})$-secure realization of the protocol. Indeed, by the triangle inequality, and the fact that classical processing can 
only reduce the trace distance, we find $d(P_{{RFM}},\tilde Q_{RFM})\leq d(P_{{RFM}},Q_{{RFM}})+d(Q_{{RFM}},\tilde Q_{{RFM}})\leq \epsilon+\epsilon'+\epsilon_{\mr{ext}}$. By the same argument, the protocol is $(\epsilon_{\mr{inp}}+\epsilon+\epsilon'+\epsilon_{\mr{ext}})$-secure when errors inherent to the input generation are taken into account (see the Appendix for an analysis of the errors introduced at this stage).  

More generally, the security (in the context of classical side-information) of more complex protocols, such as those considered in \cite{smv2,fgsv2}, where outputs of one pair of devices are used as inputs for another pair of devices can directly be proven from Theorem~1 and by keeping track of the error propagation.

\subsection*{Efficiency}
While the protocol presented above produces new randomness, it also uses a source of initial randomness $S=(S_\mr{inp},S_{\mr{ext}})$ to generate the inputs $\mb{V}$ and perform the final randomness extraction. As a straightforward generalization of condition (\ref{vxeinit}), the  security of the protocol requires this initial seed to be uniform and independent from the initial state of the devices, i.e.,
\be \label{sindep}
\rho_{SABE}=\omega_S\otimes\rho_{ABE}
\ee
where $\omega_S$ denotes the uniform distribution on $S$. 

This condition is obviously satisfied if $S$ represents the output of a genuine, cryptographically-secure random number generator. Of course, a device-independent randomness expansion protocol is useful only if it produces more randomness at its output than is consumed at its input.  It is shown in \cite{dirng} how the protocol that we have presented above can achieve quadratic expansion by choosing appropriately the probabilities $p_{vw}$ characterizing the input distribution. It can also be used as a primitive in more elaborate protocols where the output of one pair of devices are repeatedly used as input for another pair of devices. Such protocols can achieve exponential expansion, see \cite{smv2} and particularly Section 5 of \cite{fgsv2} for quantitative details (note that the application of our results, valid against classical-side information, to such concatenated protocols require not only that different pairs of devices be unentangled to the adversary to start with, but also between themselves. This assumption is again very reasonable in a trusted-provider situation). 

Note, however, that to generate private randomness, a device-independent protocol does not necessarily need to consume any cryptographically-secure randomness to start with. 
 Indeed, since we assumed in the security analysis that $S$ was made public, the seed $S$ does not need to be random with respect to the adversary, provided that condition (\ref{sindep}) is satisfied, i.e., provided that the adversary cannot exploit any prior knowledge about $S$ to influence the behaviour of the devices. If this is the case, which may be reasonable to assume in a trusted provider situation\footnote{Note though that even in a trusted provider situation, the condition (\ref{sindep}) may fail, if the adversary can modify the behaviour of the devices by controlling external parameters like, e.g., the power supply of the devices.}, the output of the protocol will nevertheless represent randomness that is private with respect to the adversary.

\section*{Acknowledgements.} We thank Ll. Masanes for pointing out an error in a previous version of this paper and Serge Fehr and Chrisitian Schaffner for useful discussions.  This work was
supported by the European EU QCS project, the CHIST-ERA DIQIP project, the Interuniversity Attraction
Poles Photonics@be Programme (Belgian Science Policy), the
Brussels-Capital Region through a BB2B Grant.

\section*{Appendix}

Here we prove that one can use a uniform distribution to efficiently sample with exponentially small error from an i.i.d. non-uniform distribution, see also \cite{ct,ky}.

\begin{theorem}
Consider the finite alphabet $K={a_{1},...,a_{|K|}}$. Let $Q$ be a probability distribution on $K$ with  $\min_{a}Q(a)=n^{-\gamma}.$ 
Let $a^{n}=a_{1},a_{2},...,a_{n}\in K^{n}$ be drawn i.i.d. according
to $Q.$ We denote $Q^{n}$ the corresponding probability distribution
on $K^{n}$. Suppose that $x\in\left\{ 0,1\right\} ^{m}$ is drawn from the uniform
distribution $\omega$ on $m$ bits. Then, for any $0\leq\gamma<1/3$, one can construct 
a function $f:\left\{ 0,1\right\} ^{m}\to K^{n}$
such that the induced probability distribution on $K^{n}$ given by
$P(a^{n})=\omega\left(f^{-1}(a^{n})\right)$ is $\epsilon$ close
to $Q^{n}$, i.e., $d(P,Q^{n})=\frac{1}{2}\sum_{a^n}|P(a^n)-P'(a^n)|\leq\epsilon$ with $m\geq nH(Q)+o(nH(Q))$ and $\epsilon\leq3\exp\left[-2n^{1-3\gamma}\right]$, where $H(Q)=-\sum_a Q(a)\ln Q(a)$ is the Shannon entropy of $Q$.
\end{theorem}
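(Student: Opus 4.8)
The statement is the fixed-length version of the classical fact that roughly $nH(Q)$ fair bits suffice to simulate a sample from $Q^n$, so the natural tool is an arithmetic-coding (cumulative-distribution) map. Applying such a map to all of $K^n$ is too wasteful: its total-variation error is of order $|K|^n 2^{-m}$, which would force $m\approx n\log|K|$ rather than $nH(Q)$. The plan is therefore to run the map only on a typical set, and to control separately (i) the $Q^n$-mass of the sequences we discard and (ii) the dyadic-rounding error on the sequences we keep. Both contributions are governed by Hoeffding's inequality, and they are exactly what produce the prefactor $3$ and the exponent $n^{1-3\gamma}$.

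First I would fix the typical set. Set $Y_i=-\ln Q(a_i)$; since $\min_a Q(a)=n^{-\gamma}$, each $Y_i$ lies in $[0,\gamma\ln n]$ and $\mathbb{E}[Y_i]=H(Q)$. Define $T=\{a^n : |\frac1n\sum_{i=1}^n Y_i - H(Q)|\le \delta\}$, equivalently $e^{-n(H+\delta)}\le Q^n(a^n)\le e^{-n(H-\delta)}$. Two-sided Hoeffding for this bounded i.i.d. sum gives $Q^n(K^n\setminus T)\le 2\exp(-2n\delta^2/(\gamma\ln n)^2)$, and I would choose $\delta=\gamma\,n^{-3\gamma/2}\ln n$ so that this bound becomes $2\exp(-2n^{1-3\gamma})$. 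Note that then $\delta\to 0$ and $n\delta=o(n)$, and that it is precisely the requirement $1-3\gamma>0$ that makes the exponent diverge; this is where the hypothesis $\gamma<1/3$ enters.

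Next the map. Order $K^n$, let $F$ be the cumulative distribution of $Q^n$, and lay the intervals $[F(a^n),F(a^n)+Q^n(a^n))$ of the typical sequences consecutively inside $[0,Q^n(T))$, sending every remaining point of $[0,1)$ to an abort symbol $\perp$. To $x\in\{0,1\}^m$ I associate the dyadic point $x2^{-m}$ and let $f(x)$ be the sequence (or $\perp$) whose interval contains it. For each typical $a^n$ the induced weight $P(a^n)=2^{-m}\#\{x : x2^{-m}\in[F(a^n),F(a^n)+Q^n(a^n))\}$ satisfies $|P(a^n)-Q^n(a^n)|\le 2^{-m}$, since only the two dyadic roundings of the interval endpoints contribute. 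Because every typical sequence obeys $Q^n(a^n)\ge e^{-n(H+\delta)}$ and the intervals are disjoint, $|T|\le e^{n(H+\delta)}$, whence the total rounding loss over $T$ is at most $\tfrac12|T|2^{-m}\le \tfrac12 e^{n(H+\delta)}2^{-m}$.

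Assembling these, the triangle inequality gives $d(P,Q^n)\le Q^n(K^n\setminus T)+\tfrac12|T|2^{-m}+O(2^{-m})$, the first term being the discarded atypical mass and the second the rounding loss on $T$. Choosing $m$ just large enough that $\tfrac12 e^{n(H+\delta)}2^{-m}\le \exp(-2n^{1-3\gamma})$ matches one copy of the target, while the atypical term already contributes $2\exp(-2n^{1-3\gamma})$, for the stated total $3\exp(-2n^{1-3\gamma})$. This choice forces $m=\frac{n(H+\delta)}{\ln 2}+\frac{2n^{1-3\gamma}}{\ln 2}+O(1)$; since $n\delta=o(n)$ and $n^{1-3\gamma}=o(n)$, this is $nH(Q)$ to leading order (up to the $\ln 2$ converting nats to bits), as claimed. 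The one genuinely delicate point, and the step I would check most carefully, is this simultaneous balancing: one must verify that the slack $n\delta$ needed to make $T$ carry almost all the mass and the slack $\log_2(1/\epsilon)$ needed to absorb the rounding are both of lower order than $nH(Q)$, which is exactly the content of the hypothesis $\gamma<1/3$.
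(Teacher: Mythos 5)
Your proof is correct and reaches the stated parameters, but it is not quite the paper's argument: the overall architecture (high-probability typical set, cardinality bound for it, dyadic-rounding sampler restricted to it) is the same, while the key typicality lemma is genuinely different. The paper works with strong, type-based typicality: its first appendix lemma demands that every letter count $N(a|x)$ be within $\alpha\sqrt{n}\sqrt{Q(a)}$ of $nQ(a)$ (Hoeffding plus a union bound over the $|K|$ letters, giving a prefactor $2|K|$), and its second lemma then converts this constraint on types into a bound on $-\log_2 Q^n(x)$ in order to count the set, with the choice $\alpha=n^{1/2-\gamma}$. You instead use weak, entropy-based typicality: a single Hoeffding application to the i.i.d.\ bounded variables $Y_i=-\ln Q(a_i)\in[0,\gamma\ln n]$ controls $-\ln Q^n(a^n)$ directly, and the cardinality bound $|T|\le e^{n(H+\delta)}$ is then immediate. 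This buys you something concrete: the paper's route actually yields an error $2|K|\exp\left[-2n^{1-3\gamma}\right]+\exp\left[-2n^{1-3\gamma}\right]$, and the factor $|K|$ is silently dropped when the theorem asserts the constant $3$ (it is only absorbable into the exponent asymptotically), whereas your single application gives the prefactor $2$ exactly and hence the clean $3\exp\left[-2n^{1-3\gamma}\right]$ with no hidden alphabet dependence; what strong typicality buys in exchange (control of all empirical frequencies) is simply not needed here. Your sampling map and its error bookkeeping, $d(P,Q^n)\le Q^n(K^n\setminus T)+\tfrac12|T|2^{-m}+O(2^{-m})$, coincide in substance with the paper's third appendix lemma (dyadic rounding with an abort symbol $\perp$). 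One point you rightly flag but do not fully close, and on which the paper is equally casual: when $\gamma>0$ the entropy $H(Q)$ may itself tend to zero (e.g., $H(Q)\approx\gamma n^{-\gamma}\ln n$ for a two-point distribution), so "$n\delta=o(n)$" does not by itself give $n\delta=o\bigl(nH(Q)\bigr)$. The comparison should be made against $nH(Q)\geq \gamma n^{1-\gamma}\ln n$, and with your choice of $\delta$ it does go through, since $n\delta/\bigl(nH(Q)\bigr)\leq n^{-\gamma/2}\to 0$ and $n^{1-3\gamma}/\bigl(nH(Q)\bigr)\leq n^{-2\gamma}/(\gamma\ln n)\to 0$; spelling this out would make your write-up strictly more rigorous than the paper's on this point.
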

\begin{proof}
The proof follows from Lemmas~\ref{iid1}, \ref{iid2}, \ref{iid3} below. Lemma~\ref{iid1} shows that there
is a probable subset of $K^{n}$ which occurs with high probability,
and Lemma~\ref{iid2} computes the size of this probable subset. In Lemmas~\ref{iid1}
and \ref{iid2}, we take parameter $\alpha=n^{1/2-\gamma}$. With this choice,
from Lemma \ref{iid1}, the error one makes is $\leq2\exp\left[-2n^{1-3\gamma}\right]$
and from Lemma \ref{iid2} the size of the probable subset is $\leq2^{nH(Q)+O(n^{1-\gamma})}$.
Finally Lemma \ref{iid3} tells us how one can sample efficiently from a distribution
of known size. We take the error parameter in Lemma \ref{iid3} to be $\exp\left[-2n^{1-3\gamma}\right]$
(i.e. the same as in Lemma \ref{iid1}). The additional size penalty is negligible
compared to the one coming from Lemma \ref{iid2}. This proves the result.
\end{proof}

\emph{Counting Typical sequences.}
Consider the alphabet $K={a_{1},...,a_{|K|}}$. If $a^{n}=a_{1},a_{2},...,a_{n}\in K^{n}$
is a word of length $n$ we denote by $N(a|x)$= number of occurences
of $a\in K$ in word $a^{n}$ (this is known as the type of the sequence).
Let $Q$ be a probability distribution on $K$. Let $a^{n}=a_{1},a_{2},...,a_{n}\in K^{n}$
be drawn i.i.d. according to $Q.$ We denote $Q^{n}$ the corresponding
probability distribution on $K^{n}$.

For any $\alpha>0$ define the set: 
\[
T_{Q\alpha}^{n}=\left\{ x\in K^{n}:\forall a\in K\ |N(a|x)-nQ(a)|\leq\alpha\sqrt{n}\sqrt{Q(a)}\right\} 
\]

\begin{lemma}\label{iid1}
$Q^{n}\left(T_{Q\alpha}^{n}\right)\geq1-2|K|\exp\left[-2\alpha^{2}\min_{a}Q(a)\right]$.
\end{lemma}
\begin{proof}
$T_{Q\alpha}^{n}$ is the intersection of $|K|$ events, namely
that for each $a\in K$ the mean of the i.i.d Bernouilli variables
$y_{i}$, defined by $y_{i}=1$ iff. $a_{i}=a$ and $y_{i}=0$ iff.
$a_{i}\neq a$, has deviation from its expected value $Q(a)$ by at
most $\alpha\sqrt{n}\sqrt{Q(a)}$. By the Hoeffding bound, each of
these events has probability $\geq1-2\exp\left[-2\alpha^{2}Q(a)\right]$.
Hence the intersection of the events has probability $\geq1-2|K|\exp\left[-2\alpha^{2}\min_{a}Q(a)\right]$.
\end{proof}
\begin{lemma}\label{iid2}
$|T_{Q\alpha}^{n}|\leq2^{nH(Q)+2\frac{\log_{2}e}{e}|K|\alpha\sqrt{n}}$.
\end{lemma}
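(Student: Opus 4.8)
The plan is to bound $|T_{Q\alpha}^{n}|$ by showing that every sequence in the typical set carries a probability mass under $Q^{n}$ not much smaller than $2^{-nH(Q)}$, and then to invoke the elementary fact that these masses sum to at most one. Concretely, for $x\in K^{n}$ with type $N(\cdot|x)$ one has $Q^{n}(x)=\prod_{a\in K}Q(a)^{N(a|x)}$, so that $-\log_{2}Q^{n}(x)=-\sum_{a}N(a|x)\log_{2}Q(a)$. I would write $N(a|x)=nQ(a)+\delta_{a}$ and, using $H(Q)=-\sum_{a}Q(a)\log_{2}Q(a)$, split this quantity into the main term $nH(Q)$ plus a deviation term $-\sum_{a}\delta_{a}\log_{2}Q(a)$.

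The crux is to control the deviation term on $T_{Q\alpha}^{n}$. By definition every $x$ in the typical set satisfies $|\delta_{a}|\leq\alpha\sqrt{n}\sqrt{Q(a)}$ for all $a$, so
\be
\Bigl|\sum_{a}\delta_{a}\log_{2}Q(a)\Bigr|\leq\alpha\sqrt{n}\sum_{a\in K}\sqrt{Q(a)}\,|\log_{2}Q(a)|\,.
\ee
I would then bound each summand by the maximum of $t\mapsto\sqrt{t}\,|\log_{2}t|$ over $t\in(0,1]$. A short calculus computation (differentiating $-\sqrt{t}\,\ln t$) shows this maximum is attained at $t=e^{-2}$ and equals $\tfrac{2}{e}\log_{2}e$. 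Summing over the $|K|$ letters gives $|\sum_{a}\delta_{a}\log_{2}Q(a)|\leq 2\tfrac{\log_{2}e}{e}|K|\alpha\sqrt{n}$, and hence $-\log_{2}Q^{n}(x)\leq nH(Q)+2\tfrac{\log_{2}e}{e}|K|\alpha\sqrt{n}$ for every $x\in T_{Q\alpha}^{n}$. This single optimization step is the only non-routine part; the constant $2\log_{2}e/e$ in the statement is exactly what it produces.

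Finally I would close with the counting argument: since $1\geq Q^{n}(T_{Q\alpha}^{n})=\sum_{x\in T_{Q\alpha}^{n}}Q^{n}(x)\geq|T_{Q\alpha}^{n}|\cdot\min_{x\in T_{Q\alpha}^{n}}Q^{n}(x)$, and the bound just obtained yields $\min_{x}Q^{n}(x)\geq 2^{-(nH(Q)+2\frac{\log_{2}e}{e}|K|\alpha\sqrt{n})}$, a simple rearrangement produces the claimed inequality for $|T_{Q\alpha}^{n}|$. I note that for the exponent $2^{nH(Q)}$ to be consistent, $H(Q)$ must here be read as the base-$2$ entropy, in contrast with the natural-logarithm convention written in the Appendix theorem; this is a harmless change of base and does not affect the structure of the argument.
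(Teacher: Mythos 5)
Your proof is correct and follows essentially the same route as the paper's: you bound $\bigl|-\log_{2}Q^{n}(x)-nH(Q)\bigr|$ on the typical set via the per-letter optimization of $\sqrt{t}\,|\log_{2}t|$ (maximum $2\tfrac{\log_{2}e}{e}$ at $t=e^{-2}$, which is exactly the paper's computation after the substitution $s=\sqrt{Q(a)}$, maximizing $-s\log_{2}s$ at $s=1/e$), and then close with the identical counting argument that the probabilities of typical sequences sum to at most one. Your remark on the base of the logarithm is also apt, since the paper's appendix does state $H(Q)$ with natural logarithms while this lemma implicitly uses the base-$2$ entropy.
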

\begin{proof}
Consider $x\in T_{Q\alpha}^{n}$. Then $Q(x)=\prod_{a\in K}Q(a)^{N(a|x)}$.
Hence 
\begin{eqnarray*}
|-\log_{2}Q(x)-nH(Q)| & = & |\sum_{a\in K}-N(a|x)\log_{2}Q(a)-nH(Q)|\\
 & \leq & \sum_{a\in K}-\log_{2}Q(a)\ |N(a|x)-nQ(a)|\\
 & \leq & \sum_{a\in K}-\log_{2}Q(a)\ \alpha\sqrt{Q(a)}\sqrt{n}\\
 & = & 2\alpha\sqrt{n}\sum_{a\in K}-\log_{2}\sqrt{Q(a)}\ \sqrt{Q(a)}\\
 & \leq & 2\alpha\sqrt{n}\frac{\log_{2}e}{e}|K|
\end{eqnarray*}

Therefore $Q(x)\geq2^{-nH(Q)-2\frac{\log_{2}e}{e}|K|\alpha\sqrt{n}}$,
and $1\geq\sum_{x\in T_{Q\alpha}^{n}}Q(x)\geq|T_{Q\alpha}^{n}|2^{-nH(Q)-2\frac{\log_{2}e}{e}|K|\alpha\sqrt{n}}$
which proves the result.
\end{proof}

\emph{Sampling from arbitrary distributions.}
Suppose that $x\in\left\{ 0,1\right\} ^{m}$ is drawn from the uniform
distribution $\omega$.

Consider the probability distribution $P(z)$ on $z\in\left\{ 0,1\right\} ^{k}$.
We want to use $x$ to sample with high precision from $P(z)$. That
is, we define a function $f:\left\{ 0,1\right\} ^{m}\to\left\{ 0,1\right\} ^{k}:x\to f(x)$
such that the induced probability distribution $P'(z)=\omega(f^{-1}(x))$
is close to $P(z)$, as measured by the trace distance $d(P,P')=\frac{1}{2}\sum_{z}|P(z)-P'(z)|$.
We have:

\begin{lemma}\label{iid3}
For any $\epsilon>0$, if $m\geq k+\log_{2}\frac{1}{\epsilon}$
we can construct a function $f$ such that $d(P,P')\leq\epsilon$.
\end{lemma}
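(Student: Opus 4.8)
The plan is to exploit the fact that a uniform variable $x$ on $\{0,1\}^m$ assigns probability exactly $2^{-m}$ to each of its $2^m$ values, so the induced distribution $P'(z)=\omega(f^{-1}(z))=|f^{-1}(z)|\,2^{-m}$ can only take values that are integer multiples of $2^{-m}$. Conversely, any choice of non-negative integers $\{a_z\}_{z\in\{0,1\}^k}$ with $\sum_z a_z=2^m$ is realizable: one simply partitions the $2^m$ points of $\{0,1\}^m$ into blocks of sizes $a_z$ and lets $f$ send the $z$-th block to $z$. Constructing $f$ therefore reduces to rounding the target probabilities $P(z)$ to multiples of $2^{-m}$ that sum to one, and the whole question becomes how small the resulting $\ell_1$ error can be made.

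To perform this rounding I would use the cumulative-distribution (arithmetic-coding) construction. I fix an ordering of $\{0,1\}^k$, set $C_z=\sum_{j\le z}P(j)$ with $C_{-1}=0$, and partition $[0,1)$ into the half-open intervals $I_z=[C_{z-1},C_z)$, each of length $P(z)$. Identifying every $x\in\{0,1\}^m$ with the integer point $x\,2^{-m}\in[0,1)$, I define $f(x)$ to be the unique $z$ with $x\,2^{-m}\in I_z$. Then $a_z=|f^{-1}(z)|$ counts the grid points $x\,2^{-m}$ falling in $I_z$, namely $a_z=\lceil 2^m C_z\rceil-\lceil 2^m C_{z-1}\rceil$. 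These counts telescope to $\lceil 2^m\rceil-\lceil 0\rceil=2^m$, so $f$ is a total function and $P'$ is a genuine probability distribution.

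The key estimate is that each symbol is rounded with error below one grid spacing. Since every ceiling differs from its argument by less than $1$, the telescoping expression $a_z-2^mP(z)=(\lceil 2^m C_z\rceil-2^m C_z)-(\lceil 2^m C_{z-1}\rceil-2^m C_{z-1})$ lies in $(-1,1)$, so $|P'(z)-P(z)|<2^{-m}$ for every $z$. Summing over the $2^k$ values of $z$ then yields
\[
d(P,P')=\frac12\sum_z |P(z)-P'(z)|<\frac12\,2^k\,2^{-m}=2^{\,k-m-1}.
\]
Finally, the hypothesis $m\ge k+\log_2(1/\epsilon)$ gives $2^{\,k-m-1}\le \epsilon/2\le\epsilon$, which is the claimed bound.

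I expect the only genuinely delicate point to be this per-symbol error bound: one must argue that rounding the individual $P(z)$ cannot accumulate, which is precisely what the cumulative/interval construction buys us, since the rounding errors of adjacent symbols share the same interval boundary and cancel in the telescoping difference. A more pedestrian alternative, setting $a_z=\lfloor 2^m P(z)\rfloor$ and redistributing the leftover mass, also works but requires extra bookkeeping to check that the deficit (bounded by $2^k$ grid units) is spread so that no single symbol absorbs more than one extra unit; the interval construction handles this automatically and is the cleaner route.
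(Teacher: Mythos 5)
Your proof is correct, and it rests on the same underlying idea as the paper's --- view the uniform $m$-bit string as a dyadic rational in $[0,1)$ and discretize the cumulative distribution of $P$ to the $2^{-m}$ grid --- but the execution is genuinely different and worth comparing. The paper first rounds each probability \emph{down} to a multiple of $2^{-m}$, so that $0\leq P(z)-P'(z)\leq 2^{-m}$ per symbol, and then must introduce an auxiliary outcome $\perp$ absorbing the leftover mass $1-\sum_z P'(z)\leq 2^{-(m-k)}$; the trace distance then collects both the per-symbol deficits and the $\perp$ mass, giving $d(P,P')\leq 2^{-(m-k)}\leq\epsilon$. You instead partition $[0,1)$ by the \emph{exact} cumulative distribution and let the ceiling identities do the rounding: the counts $a_z=\lceil 2^m C_z\rceil-\lceil 2^m C_{z-1}\rceil$ telescope to $2^m$, so $f$ is a total function onto $\{0,1\}^k$ with no auxiliary symbol, and each per-symbol error, being a difference of two quantities in $[0,1)$, satisfies $|P'(z)-P(z)|<2^{-m}$, giving $d(P,P')<2^{k-m-1}\leq\epsilon/2$. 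What your route buys: the induced $P'$ is a genuine distribution on $\{0,1\}^k$ (the paper's $f$ really maps into $\{0,1\}^k\cup\{\perp\}$, which matches the lemma's literal statement only after, say, redirecting $\perp$ to a fixed output), the telescoping argument makes the non-accumulation of rounding errors automatic, and you gain a factor of $2$ in the bound. What the paper's route buys: the one-sided rounding $P'(z)\leq P(z)$ makes the error estimate a one-line observation, and the explicit $\perp$ outcome is harmless in the intended application, where an abort symbol is already part of the formalism. Both arguments are complete; yours is the cleaner realization of the shared discretization idea.
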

\begin{proof}
We view any $x\in\left\{ 0,1\right\} ^{m}$ as a number in
$[0,1]$ written in binary: $x=\sum_{i=1}^{m}x_{i}2^{-i}$.

We define $P'(z)\in\left\{ 0,1\right\} ^{m}$ as the largest binary
number smaller than $P(z)$. Therefore $0\leq P(z)-P'(z)\leq2^{-m}$.
We have $1-\sum_{z}P'(z)=\sum_{z}P(z)-P'(z)\leq2^{-(m-k)}$. To have
a normalised distribution we define an additional outcome $\perp$
with $P'(\perp)=1-\sum_{z}P'(z)$ . Using $x\in\left\{ 0,1\right\} ^{m}$
drawn from the uniform distribution $\omega$, we can therefore sample
from $P'(z)$ thus defined with $d(P,P')=\frac{1}{2}\sum_{z}|P(z)-P'(z)|+\frac{1}{2}P'(\perp)\leq2^{-(n-k)}$.
(The function $f$ can be explicitly defined through $f^{-1}(z)=\left\{ x:\sum_{z'=0}^{z}P'(z')\leq x\leq\sum_{z'=0}^{z+2^{-k}}P'(z')\right\} $).
\end{proof}

\end{document}